\newtheorem{theorem}{Theorem}
\newtheorem{lemma}[theorem]{Lemma}
\newtheorem{definition}{Definition}
\begin{document}

%

%
\runningauthor{}

\twocolumn[

\aistatstitle{Local Search-based Individually Fair Clustering with Outliers}

\aistatsauthor{ Binita Maity$^{*}$, Shrutimoy Das$^{*}$, Anirban Dasgupta }

\aistatsaddress{\{binitamaity,shrutimoydas,anirbandg\}@iitgn.ac.in,\\Indian Institute of Technology Gandhinagar} ]

\begin{abstract}

In this paper, we present a local search-based algorithm for individually fair clustering in the presence of outliers. We consider the individual fairness definition proposed in \cite{jung}, which requires that each of the $n$ points in the dataset must have one of the $k$ centers within its $n/k$ nearest neighbors. However, if the dataset is known to contain outliers,  the set of fair centers obtained under this definition might be suboptimal for non-outlier points. In order to address this issue, we propose a method that discards a set of points marked as outliers and computes the set of fair centers for the remaining non-outlier points. Our method utilizes a randomized variant of local search, which makes it scalable to large datasets. We also provide an approximation guarantee of our method as well as a bound on the number of outliers discarded. Additionally, we demonstrate our claims experimentally on a set of real-world datasets.

\def\thefootnote{*}\makeatletter\def\Hy@Warning#1{}\makeatother\footnotetext{These authors contributed equally to this work}

\end{abstract}

\section{Introduction}

Machine learning based applications have become prevalent in various domains of life, be it in e-commerce, governance, banking, or healthcare. Ensuring that the algorithms used for training such systems are not susceptible to social biases has become an important area of research. According to a widely accepted definition of fairness, such algorithms must guarantee that no individual is at a disadvantage when availing these applications.
Achieving such {\em individual fairness} becomes more involved if there are outliers in the data. 
As an example, take the case of customer segmentation for bank loans, where customers of similar characteristics (demographics, transaction history, income,etc) are clustered together such that each segment gets offered similar loans. However, due to reasons such as errors in data gathering or abnormal spending habits of a few customers within a segment, the loans offered to that particular segment could be very different from the loans that would have been offered if such outliers were not present. That is, due to the presence of these outliers, the loans offered are not fair
to each individual. In an ideal scenario, the customers must be segmented in such a way that the loans offered to these customers are individually fair, across the segments, which requires that the outliers be excluded before segmentation.

This can be modeled as a problem of individually fair clustering after excluding outliers. Formally, given a set of $n$ points, which is known to contain a set of $m$ outliers, we want to cluster the $n-m$
points into $k$ clusters in an individually fair manner. \cite{jung} defines the notion of individual fairness in clustering by ensuring that each of the considered points must have a center within its $n/k$ neighbors. The algorithms for individually fair clustering, however,  are not suited to handle outliers. Naive application of these algorithms (in the dataset containing outliers) results in clusters that could potentially have a large fairness violation. That is, in the presence of outliers, the output of these algorithms becomes suboptimal for most points. Thus, the individually fair clusters must be computed without considering the outliers. On the other hand, applying a heuristic method to remove outliers and then computing fair clustering typically leads to an algorithm without any guarantees. 

Recently, \cite{binita} proposed a linear programming based approach to tackle this problem. However, their method is not scalable to large datasets due to the computational complexity of solving the linear program. Also, the authors did not provide any theoretical bound on the number of outliers that are discarded. In this paper, we propose a local search based algorithm. We give a novel center initialization algorithm that is able to discard outliers. Following this, we apply a local search algorithm. While local search based algorithms are known to be computationally expensive, we adopt a randomized
constrained local search algorithm \cite{bateni2024scalablealgorithmindividuallyfair} that makes our algorithm scalable to large datasets.

\paragraph{Clustering with outliers.} Clustering problems such as $k$-means, $k$-median, and $k$-center are widely used in real-world applications. Given a set of points, a similarity metric between two points, and a desired number of clusters, $k$, the goal is to partition these points into $k$ clusters, such that points within each cluster are similar. While $k$-means clustering is NP-hard even for $k = 3$, one of the most popular algorithms for clustering is Lloyd's heuristic, \cite{1056489}. In the $k$-means algorithm, it partitions the dataset into $k$ clusters to minimise the sum of squared distances to the corresponding cluster centers. However, in the real world, the dataset can be noisy, and noise in the dataset may degrade the quality of the clustering solution. To address this problem, several algorithms have been proposed, e.g. \cite{pmlr-v124-deshpande20a}, \cite{im2020fastnoiseremovalkmeans}.  \cite{Charikar2001AlgorithmsFF} proposed the first algorithm on $k$-median clustering with outliers, discarding additional points from the input set, identified as outliers, before applying the $k$-median algorithm. Similar to this, \cite{gupta2017localsearch} gave a local search-based method for $k$-means clustering in the presence of outliers in the data. 
\cite{10.5555/1347082.1347173} and \cite{krishnaswamy2018constantapproximationkmediankmeans} used a linear program to solve this problem, and hence suffer from a large runtime. \cite{huang2024nearlinear} gave a near-optimal time algorithm to tackle the problem. However, none of these algorithms considers a fairness metric. 


\paragraph{Individually fair clustering.}
The notion of individual fair clustering was introduced in \cite{jung}.  Several algorithms for individually fair clustering have been proposed in \cite{kleindessner2020notionindividualfairnessclustering,anderson2020distributional,mahabadi2020ind,pmlr-v151-vakilian22a,10.5555/3545946.359907}. Recently, \cite{bateni2024scalablealgorithmindividuallyfair} and  \cite{pmlr-v151-chhaya22a} proposed scalable methods for individually fair clustering. 




 \cite{mahabadi2020ind} gave a local search based algorithm for both $k$-median and $k$-means clustering while \cite{negahbani2021better} improved upon the guarantees on the approximation quality of the objective and the fairness by defining a linear program (LP) for this purpose as well as proposing rounding techniques for this LP.  A different LP formulation was also given in \cite{vakilian2022improved}. However, none of these works consider the setting where the dataset contains outliers. \cite{han2023approx} explored the individually fair $k$-center problem in the presence of outliers. The authors proposed an algorithm for minimizing the maximum fairness ratio of the non-outlier points. However, their work does not focus on minimizing the total cost. \cite{amagata,PDCAT} also look at the problem of fair $k$-center clustering in the presence of outliers, for group fairness. In this work, we explore the individually fair $k$-clustering problem in the presence of outliers using local search methods.

While \cite{binita} proposed an LP-based approach to solve this problem, their algorithm is not scalable and also fails to bound the number of outliers to be discarded. To address these issues, we propose a local search-based approach to solve individually fair $k$-means clustering in the presence of outliers in the dataset, where we provide a bound on the number of outliers to be discarded as well as an approximation guarantee on the clustering cost. 

Our main contributions can be enumerated as follows: 
\begin{enumerate}
    \item We present a novel center initialization algorithm, BaseCent (Algorithm \ref{alg:seeding_algo_main})  that discards fairness-based outliers, following which we refine the set of outliers and centers. 
    \item We present a local search-based method for individually fair $k$-means clustering when the dataset contains outliers.
    \item We provide a bound on the number of outliers discarded by our algorithm.
    \item We show that the proposed algorithm gives an $O(1)$ approximation to the cost of the optimal solution for the $(\gamma,k,m)$-fair clustering problem. 
    \item  We also validate the effectiveness of our algorithm empirically. 
\end{enumerate}

\section{Preliminaries}

Let $X$ be a finite set of points, $|X| = n$ and $d(p,S)$ be the distance between a point $p$ and its closest point in  $S,$ where $S \subseteq X.$ We consider $d(\cdot,\cdot)$ to be the Euclidean distance in this paper. If the set of centers $S$ is empty, then the $d(p,S)$ is $\infty.$ Also, let $d(p,q)$ be the distance between the points $p,q \in X.$ The aspect ratio of the instance is defined as $\Delta = \frac{max_{p,q} d(p, q)}{min_{p \neq q} d(p, q)} ,$ for any $p,q \in X.$ We define the $k$-means clustering cost as 
$cost(S, X) = \underset{p \in X}{\sum}  d(p, S)^{2}$ where $S$ is the set of centers such that $S \subseteq X$  and $k$ is the number  of clusters.

\begin{definition}[\emph{Fair radius $\delta(\cdot)$}] The fair radius for a point $v \in X$ is the radius of the ball containing the nearest $n/k$ neighbors  of $v.$ In other words, it is the distance of $v$ to its $n/k$-th nearest neighbour.
\end{definition}

A closely related term we will be using is a \emph{fair center.} A point $s \in S$ is a fair center for a point $p \in X$ if $s$ is a center and $d(p,s) \leq \delta(p).$

\begin{definition}[\emph{Fair $k$-means clustering}]  Given the set of data points $X$, a distance function $d(\cdot,\cdot),$  and the  fair radius function $\delta(\cdot),$ the fair $k$- means clustering problem looks at minimizing the clustering cost such that the distance from a point $v \in X$ to its center is at most $\delta(v).$ 
The problem  can be defined as follows, 
\begin{align} \label{def:fair_k_clustering}
 \begin{split}
     \underset{S \subseteq X :|S| \leq k}{min} &\sum_{v \in X, u \in S} d(v,u)^2 \\
    & \text{s.t } d(v, S) \leq \delta(v), \forall v \in X
 \end{split}   
\end{align}
\end{definition}

We note here that the definition in \eqref{def:fair_k_clustering} satisfies the individual fairness notion of clustering. We will be using the phrases individually fair $k$-means clustering and fair $k$-means clustering in this paper interchangeably. In this paper, we look at the problem of individually fair $k$-means clustering when the dataset is known to contain outliers. We consider two types of outliers in this paper: \emph{fairness-based outliers} are the points for which no fair centers can be assigned even after relaxation of their fair radii, while \emph{cost-based outliers} are the points that are far away from a given set  of centers.



\begin{definition}[\emph{$(\gamma, k, m)$- fair means clustering excluding outliers}]
Given a set of points $X$ in a metric space $(X,d).$ Let $Z$ denote the set of outliers  such that $|Z| \leq m.$ Then, a clustering of the points in $X \setminus Z$ with the set of centers $S \subseteq X \setminus Z$ is $(\gamma, k, m)$- fair if for all  $p \in X \setminus Z,$ we have $d(p,S) \leq \gamma \delta(p),$ for an appropriately chosen $\gamma.$
The problem of $(\gamma, k, m)$- fair $k$-means clustering can be defined as
\begin{align}
 \begin{split}
     \underset{\underset{Z,|Z| \leq m}{S \subseteq X\setminus Z :|S| \leq k}}{min} &\underset{v \in X\setminus Z, u \in S}{\sum}  d(v,u)^2 \\
    & \text{s.t } d(v, S) \leq \gamma \delta(v), \forall v \in X\setminus Z.
 \end{split}   
\end{align}
 

\end{definition}

    

    

    

\section{Proposed Algorithm}
Here, we present a local search-based algorithm for the $(\gamma,k,m)$-fair $k$-means clustering. Our algorithm has broadly two components: initializing the set of centers, $S_0,$  and refining this set of centers using a local search-based method. An important concept used in our algorithm is that of \emph{anchor zones}, which was introduced in \cite{bateni2024scalablealgorithmindividuallyfair}. We define the concepts of anchor points and anchor zones in the following discussion.

\cite{bateni2024scalablealgorithmindividuallyfair} proposed a \emph{seeding} algorithm for initializing a subset of centers $S_0, |S_0| \leq k,$ before running the local search algorithm. This seeding algorithm partitions the points into disjoint sets, each set being a ball around each $p \in S_0.$ The local search method proposed in \cite{bateni2024scalablealgorithmindividuallyfair} ensures that each of these disjoint sets is assigned at least one fair center. We refer to any point $p \in S_0$ as an \emph{anchor point.}
Also, for any $p \in S_0,$ we refer to  a ball $B(p)$ of radius $\gamma \delta(p)$ around $p$ as the \emph{anchor zone} for $p,$ for a suitable chosen  relaxation parameter $\gamma.$ Since every point in $X$ is assigned an anchor zone and each anchor zone is assigned at least one cluster center, these constraints enable us to bound the fairness violation of each point (relaxation of the fair radius required such that a fair center is assigned to it).

The seeding algorithm proposed in that paper returns infeasible if the size of $S_0$ becomes more than $k.$ Thus, if the dataset contains outliers, their algorithm may denote the problem as infeasible and the fair $k$-clusters for the non-outlier points will not be computed. Thus, we propose a novel initialization algorithm, \emph{BaseCent} (Algorithm \ref{alg:seeding_algo_main}), that computes the initial set of points $S_0,$ and also discards a set of points as outliers. Since, the set of points discarded at this step will not be considered for assigning fair $k$-means centers, these outliers are refered to as the fairness-based outliers.

\def\F{{\cal F}}
Let $\F_{k, m}$ be the set of all feasible instances of $(\gamma,k, m)$-fair $k$ clustering with $m$ outliers. We assume that we have an instance from $\F_{k,m}$. One possible instance where the $(\gamma,k,m)$ fair $k$-means clustering problem could become is infeasible when the points in $X \setminus Z$ requires more than $k+m$ anchor points.

\subsection{Initializing $S,$ the set of centers}
We discuss the BaseCent algorithm in detail. This algorithm, Algorithm \ref{alg:seeding_algo_main}, computes the initial subset of centers, $S_0$(the anchor points), and discards a set of fairness-based outliers, $Z_0.$ The motivation for this initialization algorithm has been drawn from the seeding algorithm in \cite{bateni2024scalablealgorithmindividuallyfair}. However, \cite{bateni2024scalablealgorithmindividuallyfair} does not consider the presence of outliers in the dataset.

Given the set of points $X$ and a relaxation parameter $\gamma,$  we keep growing the set of centers, $S,$ as long as there is a point $v \in X$ such that $d(v,S) > \gamma\delta(v).$ This process continues until  $m$ points remain. This is because the points are added to $S$ (or considered for adding to $S$) in order of their increasing fair radii. Thus, the points with the $m$ largest fair radii are discarded as the initial set of outliers, termed as fairness-based outliers.

In Lemma \ref{lem:gamma_plus_2}, we show that if we remove an additional set of $m$ cost-based outliers, then the $n-2m$ points, in a feasible solution of $\F_{k,m},$ can be covered by at most $k$ anchor zones with an increased relaxation parameter of  $\gamma' = \gamma+2.$ We refer to the previous anchor zones as $\gamma$-anchor zones and the new ones as the $\gamma'$-anchor zones. After discarding the fairness-based outliers, $Z_0,$ the remaining $n-m$ points might still have outliers, hence, we might require more than $k$ anchor zones to cover these points. Suppose we create $k+r$ anchor zones for the $n-m$ non-outlier points. Then there exists a feasible solution that discards an additional set of $m$ points as cost-based outliers and  covers the $n-2m$ non-outlier points using $k$ anchor zones with $\gamma' = \gamma+2$  fair radius relaxation. Note that we require at most  $k$ anchor zones for a feasible solution to  $\F_{k,m}$ since each anchor zone is assigned at least one fair center. However, we do not have an exact algorithm that can guarantee that we find these anchor zones. So, we apply a greedy set cover algorithm for computing these $k$ anchor zones.

Following step \ref{alg:step6} of the algorithm, we have $k+r$ $\gamma$-anchor zones ($|S_{rem}| = k+r$). For each anchor point, we create sets $P_i$'s where each point $x \in X'$ is assigned to $P_i$'s that are within $\gamma+2$ violation of $\delta(x).$ This implies that $P_i$'s are not necessarily disjoint.
We now have $k+r$ sets that fully cover $n-m$ points, out of which we know that there is a collection of $k$ sets that covers the $n-m$ points. We do not discard cost-based outliers in Algorithm \ref{alg:seeding_algo_main}. We know that the optimal set cover is bounded by $k.$ We run the greedy set cover algorithm on the sets $\{P_i\}_{i=1}^{|S_{rem}|}$, which gives at most $k\log n$ sets, i.e.,  $\mathcal{O}(k\log n)$ anchor zones with increased radius and their centers. The anchor points of these sets are the anchor points from $S$ that were used for creating the $P_i$'s. These anchor points are returned as $S_0.$




\begin{lemma}\label{lem:gamma_plus_2}
        Suppose  $n-m$ points are covered by $k+r$ $\gamma$-anchor zones $(0 \leq r \le m)$, with the $m$ points having the largest fair radii being discarded as fairness-based outliers. Then, there exists a set of $k$ $\gamma'$-anchor zones that covers $n-2m$ points, with $\gamma' = \gamma+2.$ 
\end{lemma}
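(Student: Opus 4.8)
The plan is to start from a feasible solution of $\F_{k,m}$, which by definition covers $n-m$ of the points with at most $k$ $\gamma$-anchor zones (each anchor zone getting a fair center), and then argue that the $k+r$ $\gamma$-anchor zones produced by BaseCent after discarding the $m$ fairness-based outliers can be consolidated into $k$ zones at the cost of (i) enlarging the relaxation from $\gamma$ to $\gamma'=\gamma+2$ and (ii) throwing away an additional $m$ points as cost-based outliers. I would set up notation carefully: let $Z_0$ be the $m$ discarded fairness-based outliers (the points of largest fair radius), let $X' = X \setminus Z_0$ with $|X'| = n-m$, and let $S^\star$ be the set of $\le k$ anchor points of some fixed feasible solution $I^\star \in \F_{k,m}$, with its outlier set $Z^\star$, $|Z^\star|\le m$.

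The key step is a \emph{charging/merging argument}. For each of our $k+r$ anchor points $a_i$, I want to associate it with an anchor zone of the reference solution $S^\star$. Consider a point $x \in X'$ not in $Z^\star$: it lies in some $\gamma$-anchor zone $B^\star(s)$ of the feasible solution, i.e. $d(x,s)\le \gamma\delta^\star(x)$, where $\delta^\star$ is the fair radius in the reference instance. The subtlety is that the fair radii $\delta(\cdot)$ used by BaseCent are computed on the full set $X$ (or on $X'$), while $\delta^\star$ is computed on $X\setminus Z^\star$; I need the elementary fact that removing up to $m$ points can only shrink the $n/k$-nearest-neighbour radius, so $\delta^\star(x)\le\delta(x)$ — this lets me pass between the two. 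Now, if two of our anchor points $a_i, a_j$ fall in the same reference zone $B^\star(s)$, then by the triangle inequality $d(a_i,a_j)\le d(a_i,s)+d(s,a_j)$, and each of these terms is at most $\gamma\delta^\star(\cdot)\le\gamma\delta(\cdot)$ of the respective anchor point; combined with the fact that $a_i$ was only added to $S$ because it was \emph{not} $\gamma$-covered by the earlier centers, I can relate $\delta(a_i)$ and $\delta(a_j)$ and conclude that one zone, slightly enlarged (the ``$+2$'' in $\gamma'$ absorbing two triangle-inequality hops scaled by the radius), contains the other. So the $k+r$ zones collapse onto the $\le k$ reference zones, giving $k$ $\gamma'$-anchor zones; the only points lost in this consolidation are those in $Z^\star$ together with at most the $m$ already-removed points, but since we want to end at $n-2m$ covered points we can afford to additionally discard $|Z^\star|\le m$ points as cost-based outliers.

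The main obstacle I expect is making the radius bookkeeping tight enough to land exactly at $\gamma'=\gamma+2$ rather than something like $3\gamma$ or $\gamma+O(1)$ with a larger constant. The issue is that when I merge $a_j$'s zone into $a_i$'s, a point $x$ previously covered by $B_\gamma(a_j)$ (so $d(x,a_j)\le\gamma\delta(x)$ roughly) must now be covered by a ball around $a_i$; the radius I need is $d(x,a_i)\le d(x,a_j)+d(a_j,s)+d(s,a_i)$, and the last two hops need to be re-expressed as a multiple of $\delta(x)$, not of $\delta(a_j)$ or $\delta(a_i)$. The ordering by increasing fair radius in BaseCent is exactly what makes this work: anchor points added later have fair radius at least that of earlier-covered points, and within a common reference zone the fair radii are comparable up to the covering relation, so the two extra hops each cost at most $\delta(x)$ in the worst case, yielding the clean $\gamma+2$. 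I would isolate this comparability claim as the one genuinely load-bearing inequality and verify it against the precise order in which BaseCent inspects points, then the rest is triangle-inequality assembly and a counting check that no more than $2m$ points total are dropped.
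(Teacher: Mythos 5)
Your overall plan---compare the $k+r$ BaseCent zones against a feasible solution of $\mathcal{F}_{k,m}$ and pay for the consolidation with the ``$+2$'' in the relaxation and $m$ additional discarded points---is in the right spirit, but the specific merging step has a genuine gap. You charge each of your anchor points $a_j$ to the reference zone (fair cluster) containing it and keep one anchor per reference cluster; this only works for anchors that are themselves non-outliers of the reference solution. An anchor $a_j\in Z^\star$ has no reference center to be charged to, yet its $\gamma$-zone may be the only zone covering many points of $X'$ that are \emph{not} in $Z^\star$; your accounting that ``the only points lost are those in $Z^\star$ together with the $m$ already-removed points'' therefore does not follow---the loss can be an entire anchor zone per outlier anchor, which is not bounded by $m$ points. (Re-covering such a point $x$ through its own reference center $s_x$ would need a retained anchor in $s_x$'s cluster with fair radius at most $\delta(x)$, which your scheme does not guarantee, and there may be no anchor charged to $s_x$ at all.) In addition, the auxiliary fact you invoke---that deleting up to $m$ points ``can only shrink'' the $n/k$-nearest-neighbour radius---is both unnecessary (the $(\gamma,k,m)$ definition keeps the same $\delta(\cdot)$, computed once on $X$) and wrong in the stated direction: deleting points pushes the distance to any fixed neighbour rank up, and with the rank also changing the comparison can go either way. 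Finally, the ``load-bearing'' comparability you defer ($\delta(a)\le\delta(x)$ for the anchor $a$ to which $x$ is assigned, keeping the minimum-radius anchor per reference cluster, and taking the reference solution to be exactly fair) is precisely what the $\gamma+2$ bookkeeping hinges on, so as written this is a plan rather than a proof.

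The paper's argument avoids all of this by reversing the direction of the charging: fix a feasible solution, and for each of its fair centers $c$ take the farthest assigned point $p$ of its cluster and the anchor $a$ whose $\gamma$-zone covers $p$, so $d(p,a)\le\gamma\delta(p)$ and $d(c,p)\le\delta(p)$; the triangle inequality gives $d(c,a)\le(\gamma+1)\delta(p)$, hence a ball of radius $(\gamma+2)\delta(p)$ around $a$ swallows the entire cluster of $c$. This yields one enlarged zone per reference center, hence at most $k$ zones, covering every point outside $Z_0\cup Z^\star$, i.e.\ at least $n-2m$ points---no merging of your $k+r$ zones, no charging of anchors, and no case analysis for anchors that happen to be reference outliers.
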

\begin{proof}
    We consider a feasible solution $S$ of ${\cal F}_{k,m}$. Consider the $n-2m$ points that are not discarded as outliers (referred to as non-outliers) either in the feasible solution $S$ or via the $k+r$ anchor zones.  
    Now, let $c\in S$ be a fair center and let $P_c$ is the set of points that have been assigned to $c$. Let $p\in P_c$ be a point such that $p = \underset{p'\in P_c}{\text{argmax }} d(p',c),$. Since, $c$ is a fair center, $d(c,p) \leq \delta(p).$
    Again, since $p$ was assigned to a $\gamma$-anchor zone (one of the $k+r$ anchor zones) with the anchor point being $a$, $d(p,a) \leq \gamma \delta(p).$ Therefore, by triangular inequality, we have $d(c,a) \leq (\gamma+1) \delta(p).$ Now consider the ball $B(a)$ of radius $(\gamma+2)\delta(p))$ around $a.$ Then, all the points in $P_c$ are part of this ball.  So, centered at $a,$ if we allow any point $p'$ with $d(a,p') \leq (\gamma + 2) \delta(p)$ to be captured by the anchor zone at $a$ then any point in the fair cluster of $c$ will also be subsumed by these ``bigger" anchor zones.

    Since, given the larger anchor zones, the points can be clustered using $k$ fair centers and the anchor zones are disjoint, we can have at most $k$ such anchor zones. Note that it is not obvious which $k$ anchor zones to use out of the $k+r$ sets. 
\end{proof}

\begin{algorithm}[H]
\caption{\emph{BaseCent}}
\label{alg:seeding_algo_main}
\textbf{Input}: $X, \delta(.),\gamma$\\
\textbf{Output}: Set of anchor points $S_0$, set of fairness-based outliers $Z_0.$ 
\begin{algorithmic}[1] 
\STATE $ S \gets \phi, Z_0 \gets \phi, S_0 \gets \phi$
\WHILE{ $\exists p \in X : d(p,S) > \gamma \delta(p)$}
\STATE $p^* \gets argmin\{\delta(p'): p' \in \{p \in X| d(p,S)> \gamma \delta(p)\} \} $
\STATE $S \gets S \cup \{p^*\}$
\ENDWHILE

\STATE \label{alg:step6} Discard the last $m$ points to be assigned anchor zones as outliers ( as well as the anchor zones if those become empty after discarding the outliers). Let the discarded set be $Z_0$ and the remaining anchor points be $S_{rem}$

/*\textbf{Creating the sets with $(\gamma+2)$ relaxation}*/
\STATE $X' \gets X \setminus Z_0$
\STATE Let $P \gets \{P_i\}_{i=1}^{|S_{rem}|},$ where each $P_i$ contains the partition of points assigned to the anchor zone defined by the anchor point $S_{rem}^{i}.$
\STATE $P_i \gets \phi, \forall i \in [|S_{rem}|]$
\IF{$ |S_{rem}| > k$}
\FOR{ each $a \in X'$}
\FOR{ $i \in |S_{rem}|$}
\IF{$d(a,S^{i}_{rem}) < (\gamma+2)\delta(a)$}
\STATE $P_i \gets P_i \cup {a}$
\ENDIF
\ENDFOR
\ENDFOR
\\
/*\textbf{Apply Set Cover on the sets in $P.$}*/
\STATE $X'' \gets X'$
\WHILE{$X'' \neq \phi$}
\STATE Pick $j \in [|S_{rem}|] : j =  \underset{j}{\text{argmax}} |P_j \cap X''|$, the set which covers the maximum number of uncovered elements
\STATE $S_0 \gets S_0 \cup \{S^{j}_{rem}\}$
\STATE $X'' \gets X'' \setminus P_j$
\STATE $S_{rem} \gets S_{rem}\setminus\{S^{j}_{rem}\}$
\ENDWHILE
\ELSE
\STATE $S_0 \gets S_{rem}$
\ENDIF

\STATE \textbf{return} $S_0, Z_0$
\end{algorithmic}
\end{algorithm}

Lemma \ref{lem:gamma_plus_2} implies that after removing the initial set of $m$ points as  fairness-based outliers, the remaining $n-m$ points can be clustered such that each of the non-outlier points will have a fair center in its (relaxed) fair radius. However, there might still be outliers included in the set of $n-m$ non-outlier points. The main algorithm, Algorithm \ref{alg:ls}, takes care of these cost-based outliers.




\subsection{Local Search for fair $k$-means Clustering with Outliers}
The pseudo code for the main algorithm is given in Algorithm \ref{alg:ls}. It refines the initial subset of centers computed in the BaseCent algorithm and also discards additional points as cost-based outliers. We note here that the BaseCent algorithm discards fairness-based outliers while algorithm $2$  discards only cost-based outliers.  We select a new set of outliers (cost-based) from the set of non-outliers that satisfy fairness constraints. So the set of outliers output by Algorithm \ref{alg:ls} contains a union of both fairness-based as well as cost-based outliers. 

Algorithm \ref{alg:ls}, referred to as \emph{LSFO,} takes as input the set of points $X,$ the required number of clusters $k,$ the number of outliers known to be present in the dataset $m,$ the fair radius function $\delta(\cdot),$ the fairness relaxation parameter $\gamma,$ a constant $\epsilon >0,$ and a ball function $B(\cdot)$ that defines a ball of radius $\gamma+2$ around an anchor point $p \in S_0.$
The BaseCent algorithm outputs an initial set of anchor points ($S_0$) and outliers $Z_0.$ If $|S_0| < k$, then we randomly pick the remaining $k-|S_0|$ centers from the set of points $X \setminus \{Z_0 \cup S_0\}.$ 

The overall structure of Algorithm \ref{alg:ls} is motivated from \cite{gupta2017localsearch}. However, their algorithm employed local search while we employ the   constrained local search method proposed in \cite{bateni2024scalablealgorithmindividuallyfair} for improving scalability. The LSFO algorithm maintains a set of centers $S$ and a set of outliers $Z_0.$ While the local search method in \cite{gupta2017localsearch} initialized the centers arbitrarily, we utilize the BaseCent algorithm for this purpose. At each iteration, the current set of centers $S$ must satisfy the constraint that each anchor zone must have at least one center. The satisfaction of this constraint is ensured in step \ref{alg:LS_anchor_zone} of Algorithm \ref{alg:ls} and step \ref{alg:CLS_anchor_zone} of algorithm \ref{alg:CLS}.

Before discussing the steps of the algorithm, we define some of the notations used in Algorithms \ref{alg:ls} and \ref{alg:CLS}. The $cost(S,Z_0)$ is defined as the $k$-means cost for the points in $X_{fair}\setminus Z$ with respect to the set of centers $S.$  Also, define $cost(S,X)$ as the $k$-means cost for the set of points in $X,$ for any $X.$ The outliers function $outliers(S,Z_0)$ denotes the $m$ farthest points in the set $X_{fair} \setminus Z_0$ with respect to the centers in  $S.$ Also, $outliers(S)$ stores the set of outliers in the set $X_{fair}$ with respect to the set of centers $S.$
 
The LSFO runs in three stages. Given the current set of centers $S$ and the current set of outliers $Z_0,$ the first stage of the algorithm computes a locally optimal set of centers for the points in $X_{in}.$ We adopt the constrained local search method proposed in \cite{bateni2024scalablealgorithmindividuallyfair} to make this step scalable. In Algorithm \ref{alg:CLS}, we run the ConstrainedLS++ algorithm as long as there is a significant improvement in cost.

\begin{algorithm}[H]
\caption{\emph{Local Search for Fair $k$-means clustering with Outliers (LSFO)}}
\label{alg:ls}
\textbf{Input}: Datapoints $X$, number of centers: $k$, number of outliers : $m$, fair radius function $\delta(\cdot),$ relaxation parameter $\gamma,$ a constant $\epsilon,$ a ball function $B(\cdot)$\\
\textbf{Output}: Set of fair centers $S$, set of outliers $ Z_0$
\begin{algorithmic}[1] 
\STATE $S_0, Z_0 \gets BaseCent(X, \delta(\cdot), \gamma)$
\IF{ $|S_0| < k$}
\STATE $S \gets S_0 \cup  T\ \text{where}\ (T \subseteq X \setminus \{Z_0 \cup S_0\}, |T| = k-|S_{0}|)$
\ENDIF
\STATE $X_{fair} \gets X \setminus Z_0$
\STATE $\alpha \gets \infty$
\WHILE{$\alpha(1 - \epsilon/k) > cost(S, Z_0)$}
\STATE $\alpha \gets cost(S, Z_0)$
\STATE $X_{in} = X \setminus Z_0$\\
// Constrained Local Search with no  outliers
\STATE $S \gets ConstrainedLS++(X_{in},S_0,S, B(\cdot))$ 
\STATE $\bar{S} \gets S$
\STATE $\bar{Z} \gets Z_0$\\
// Discard $m$ additional outliers
\IF{$(1 - \epsilon/ k )\;cost(S, Z_0) > cost(S,Z_0 \cup outliers(S, Z_{0} ))$} 
\STATE $\bar{Z} \gets{Z_0 \cup outliers(S, Z_{0} )}$
\ENDIF\\
// Check if the set of centers can be refined with respect to $X_{fair}$ and discard additional outliers
\FOR{ each $u \in X_{fair}$ and each $v \in S$ }
\STATE \label{alg:LS_anchor_zone} $ V \gets \{v \in S| \forall x \in S_0 : (S \setminus \{v\} \cup \{u\}) \cap B(x) \neq \emptyset\}$
\STATE$ v^{\ast} \gets arg min_{v\in V} cost(S \cup \{u\} \setminus \{v\}, Z_0 \cup outliers(S \cup \{u\} \setminus \{v\}))$
\IF{$cost(S \cup \{u\} \setminus \{v^{\ast}\}, Z_0 \cup outliers(S \cup \{u\} \setminus \{v^{\ast}\})) < cost(\bar{S}, \bar{Z})$}
\STATE $\bar{S} \gets S \cup \{u\} \setminus \{v^{\ast}\}$
\STATE $\bar{Z} \gets Z_0 \cup outliers(S \cup \{u\} \setminus \{v^{\ast}\})$
\ENDIF
\ENDFOR
\IF{$(1 - \epsilon/ k )\;cost(S, Z_0) > cost(\bar{S},\bar{Z})$}
\STATE $S \gets \bar{S}$
\STATE $Z_0 \gets \bar{Z}$
\ENDIF
\ENDWHILE
\STATE \textbf{return} $S,Z_0$
\end{algorithmic}
\end{algorithm}

Once we obtain the set of centers from Algorithm \ref{alg:CLS}, in the second stage, we check to see if the discarding  further cost-based outliers leads to significant improvement in the objective. If there is an improvement in cost, we discard the additional set of cost-based outliers. Finally, in the third stage, we check whether the current set of centers in $S$ can be swapped with any of the points in $X_{fair}$ that leads to further improvement in the objective, after removing 
 $m$ additional cost-based outliers. After each iteration of the LSFO algorithm,  we maintain the best set of centers $S$ and outliers $Z_0$ computed until that iteration. The LSFO algorithm terminates if there is no significant improvement in the objective function.

\begin{algorithm}
\caption{ConstrainedLS++}
\label{alg:CLS}
\textbf{Input} : $ X, S_0, S, B(.)$\\
\textbf{Output}: Set of fair centers $S$
\begin{algorithmic}[1] 
\STATE $\alpha \gets \infty$
\WHILE{$\alpha(1 - \epsilon/k) > cost(S, X)$}
\STATE $\alpha \gets cost(S, X)$
\STATE Sample $p \in X $ with probability $\frac{cost(\{p\},S)}{ \sum _{q \in X}cost(\{q\},S) }$
\STATE \label{alg:CLS_anchor_zone} $ Q \gets \{q \in S| \forall x \in S_0 : (S \setminus \{q\} \cup \{p\}) \cap B(x) \neq \emptyset\}$
\STATE$ q^{\ast} \gets arg min_{q\in Q} cost(X,S \setminus \{q\} \cup \{p\} )$
\IF{$cost( X, S \setminus \{q^{\ast}\} \cup \{p\}) < cost( X, S)$}
\STATE$ S \gets S \setminus \{q^{\ast}\} \cup \{p\}$
\ENDIF
\ENDWHILE
\STATE \textbf{return} $S$
\end{algorithmic}
\end{algorithm}

\paragraph{Bounding the number of outliers}
The number of outliers discarded in Algorithm \ref{alg:ls}  can be bounded as follows. The initial set of outliers is discarded in the  BaseCent algorithm, which is  $m$  fairness-based outliers. 
Assuming that $\underset{p \neq q}{\text{min }} d(p,q) = 1,$ for any $p,q \in X_{fair}$ the worst possible cost is $O(n \Delta^2).$ In each iteration, the objective improves by a factor of at least $(1 - \frac{\epsilon}{k}).$ Thus, the LSFO algorithm achieves a cost of $1 \leq OPT$ in at most $O(\frac{\epsilon}{k}\log(n \Delta))$ iterations. In each iteration of Algorithm \ref{alg:ls}, at most $m$ additional outliers are discarded. 
 Hence, the total number of outliers discarded is $m+ \frac{mk}{\epsilon}\log(n \Delta).$

\paragraph{Runtime analysis}
Assuming the points are $d$ dimensional, the \textbf{while} loop in lines $2 -5$ in BaseCent  runs in time $O(ndk)$ and the greedy set cover runs in time $O(n k^2).$ Thus, the BaseCent algorithm runs in time $O(ndk + nk^2).$ As discussed previously, the \textbf{while} loops in algorithms \ref{alg:ls} and \ref{alg:CLS} runs in time $O(\frac{\epsilon}{k}\log(n \Delta)).$ Each iteration of Algorithm \ref{alg:CLS} has two steps : computing the distance of points to all centers ($O(ndk)$ time) and checking that the anchor zone constraint (at least one center in each anchor zone) is satisfied ($O(dk)$ time.) Thus, Algorithm \ref{alg:CLS} runs in time $O(\frac{ndk^2}{\epsilon}\log (n \Delta)).$ In stage $3$ of the LSFO algorithm, the maximum number of swaps is $O(kn)$ and each swaps requires $O(n)$ time for reassignment of points to the centers. Thus, the third stage requires $O(n^2 k)$ time. Since, the LSFO algorithm runs for $O(\frac{\epsilon}{k}\log(n \Delta))$ iterations, the total runtime is $O(ndk + nk^2 + (\frac{ndk^2}{\epsilon}\log (n\Delta) + n^2k)\frac{k}{\epsilon}\log(n \Delta) ) \approx O(\frac{ndk^3}{\epsilon^2} \log^2(n\Delta) + \frac{n^2k^2}{\epsilon}\log(n\Delta)).$

\section{Approximation Guarantee}
In order to analyze the approximation guarantees of Algorithm \ref{alg:ls}, we follow the framework of \cite{gupta2017localsearch}. However, in our case, since we have to consider the individual fairness constraints, the arguments have to satisfy the anchor zone constraints. Most of the lemmas are restatements from \cite{gupta2017localsearch})  and we  include lemmas and the proofs in the appendix. 

Our goal is to show that the set of centers computed by Algorithm \ref{alg:ls} is close to the optimal set of centers. In order to do this, we show that the cost of swapping a center from the set of solutions with an optimal center from the same anchor zone is not large. The details of this proof is very similar to the proof technique in \cite{gupta2017localsearch}, so we include this proof in the appendix. The construction of the permutation defined in \cite{gupta2017localsearch} will change due to the anchor zone constraints, which will be discussed in the appendix. We state the main approximation guarantee here.

\begin{theorem}
 Algorithm  \ref{alg:ls} is an $ O(1)$-approximation algorithm for any fixed $0 < \epsilon \leq \frac{1}{4}.$
\end{theorem}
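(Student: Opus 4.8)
The plan is to follow the local-search analysis of \cite{gupta2017localsearch} for $k$-means with outliers, adapting it to respect the anchor-zone constraints introduced by BaseCent. Let $S$ and $Z_0$ be the centers and outliers returned by Algorithm \ref{alg:ls}, and let $S^\ast, Z^\ast$ be an optimal solution for the $(\gamma',k,m)$-fair clustering instance (with $\gamma' = \gamma+2$, as permitted by Lemma \ref{lem:gamma_plus_2}). Write $\mathrm{cost}(S,Z_0)$ for the clustering cost of $X_{fair}\setminus Z_0$ against $S$. Because Algorithm \ref{alg:ls} terminates only when no single swap (combined with discarding $m$ further outliers) improves the cost by a factor of $(1-\epsilon/k)$, we have, for \emph{every} admissible swap pair $(u,v)$ with $u\in X_{fair}$, $v\in S$, and $S\setminus\{v\}\cup\{u\}$ still hitting every anchor zone,
\begin{equation}\label{eq:localopt}
\mathrm{cost}\bigl(S\setminus\{v\}\cup\{u\},\,Z_0\cup outliers(S\setminus\{v\}\cup\{u\})\bigr)\;\ge\;(1-\epsilon/k)\,\mathrm{cost}(S,Z_0).
\end{equation}

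First I would set up a bijection-style argument between $S$ and $S^\ast$: group the optimal centers by the solution center closest to them, and within each group pair them up so that each $s^\ast\in S^\ast$ is matched to some $s\in S$, with the crucial extra requirement that $s$ and $s^\ast$ lie in the same anchor zone $B(x)$, $x\in S_0$ — this is the point where the construction departs from \cite{gupta2017localsearch} and is what makes the swap in \eqref{eq:localopt} admissible (the anchor-zone constraint $S\setminus\{v\}\cup\{u\} \cap B(x)\neq\emptyset$ continues to hold after swapping $s$ out and $s^\ast$ in). One handles the two standard cases: centers of $S$ matched to exactly one $s^\ast$ (swap them directly) and centers matched to zero or several (these get paired with a center that is "not too useful", so swapping it out costs little). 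For each such swap I would upper bound the change in cost by reassigning: points whose optimal center is the swapped-in $s^\ast$ go to $s^\ast$ at cost $\mathrm{OPT}$-contribution, and all other affected points are rerouted to a nearby surviving center, paying at most a constant times their current and optimal costs via the approximate triangle inequality $\|a-c\|^2 \le 2\|a-b\|^2 + 2\|b-c\|^2$; the additional-outlier term $outliers(\cdot)$ only helps, since it can discard the $m$ points that $S^\ast$ also discards. Summing \eqref{eq:localopt} over the matched pairs and telescoping yields $\mathrm{cost}(S,Z_0) \le c_1\,\mathrm{OPT} + c_2\,\epsilon\,\mathrm{cost}(S,Z_0)$ for absolute constants $c_1,c_2$; for $\epsilon \le \tfrac14$ (so that $c_2\epsilon < 1$ after the bound is made precise) this rearranges to $\mathrm{cost}(S,Z_0) = O(1)\cdot\mathrm{OPT}$.

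The remaining ingredient is that the outlier sets be compatible enough for this reassignment to go through: the analysis of \cite{gupta2017localsearch} handles the mismatch between $Z_0$ (points the algorithm discards) and $Z^\ast$ (points the optimum discards) by charging points in $Z^\ast\setminus Z_0$ to their optimal cost and points in $Z_0\setminus Z^\ast$ to the algorithm's cost, which is legitimate because $|Z_0\cap X_{fair}|$ and $|Z^\ast|$ are both controlled (each stage of Algorithm \ref{alg:ls} discards exactly $m$ cost-based outliers, matching the $m$ of the optimum). I would import that bookkeeping essentially verbatim. The main obstacle is the anchor-zone-respecting matching: I must argue that one can always pair every $s^\ast \in S^\ast$ with an $s\in S$ in the same $B(x)$ while keeping the "useless center" structure needed to bound the zero/multiple cases — this uses that $S_0\subseteq S$ throughout (every anchor zone always contains at least one solution center by construction), that $S^\ast$ is itself $\gamma'$-fair hence each $s^\ast$ lies in some anchor zone, and a counting/Hall-type argument to avoid over-committing solution centers. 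Once the matching exists, the cost accounting is routine, so I would relegate the detailed permutation construction and the per-swap inequalities to the appendix and state only the constants here.
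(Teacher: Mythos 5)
Your proposal follows essentially the same route as the paper's appendix proof: adapt the local-search-with-outliers analysis of \cite{gupta2017localsearch} by building swap pairs between $S$ and $S^\ast$ restricted to common anchor zones, use the termination (local-optimality) conditions of Algorithm \ref{alg:ls}, bound the reassignment cost of each swap via a permutation/rerouting argument with an approximate triangle inequality while discarding $m$ extra outliers to absorb $Z^\ast$, and rearrange the resulting inequality for $\epsilon \leq \tfrac14$ to get an $O(1)$ factor. The only cosmetic slip is your claim that $S_0 \subseteq S$ throughout --- the invariant actually maintained is the one in your parenthetical, namely that every anchor zone always contains at least one center of $S$ --- and with that reading your outline matches the paper's argument.
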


 \section{Experimental Details}

 In this section, we implement the algorithms discussed in the previous sections and compare our results with different algorithms.  Similar to \cite{bateni2024scalablealgorithmindividuallyfair}, we have compared our results with \emph{Greedy} seeding algorithm as baseline (using the anchor points as the fair centers), \emph{ICML20} (\cite{pmlr-v119-mahabadi20a}), \emph{NeurIPS21,NeurIPS21Sparsify} (\cite{negahbani2021better}) and \emph{LSPP} (\cite{bateni2024scalablealgorithmindividuallyfair}).

\paragraph{Datasets} We have experimented on three datasets from the UCI Repository: Adult ($40k$ points), Bank($40k$ points), and Skin ($240k$ points) \cite{UCI}. In our experiments, we experimented on the full datasets, whereas for comparisons, we followed the dataset sizes used in \cite{bateni2024scalablealgorithmindividuallyfair}.

\paragraph{Metric} We experimented over two metrics : $k$-means cost and maximum bound ratio, which can be defined as $\rho = max_{p} \frac{d(p,S)}{\delta(p)}$,  where $S$ is the solution of the algorithm, and $\delta(\cdot)$ is the fair radius function.

\paragraph{\textbf{Introducing synthetic outliers}}

To compare with existing work, we experimented with $4000$ data points, except for LP-based experiments. There is a possibility that the sample may not contain outliers. We have artificially introduced outliers in the dataset by randomly sampling $1\%$ of the  sampled points and added uniform noise to each of the features of these points. For a feature $col,$ we add a noise sampled from $Uniform(0,col\_max),$ where $col\_max$ is the maximum feature value for $col$ over all the points. \cite{gupta2017localsearch}
used this method to generate artificial outliers.


\section{Results and Discussions}

In Table \ref{tab:comparison}, we compare the $k$-means cost and the maximum fairness violation, $\rho.$ Except LSPP and LSFO, the experiments for the other algorithms are conducted on $4000$ randomly sampled points, due to their high computational costs.
We observe that the $k$-means cost for our method (LSFO)  is much lower than each of the baselines. Also, the max bound ratio is less than LSPP in all the datasets.  In some of the cases, the $\rho$'s are better than ours. This could be due to the fact that the sampled points in these cases do not have points that violate the fair radius by large margins. The number of outliers discarded is $495$ for  Adult dataset, $707$ for the Bank dataset and $14000$ for the Skin dataset.

\begin{table}
\caption{Comparison over SOTA algorithms, given $k=10$. To make it comparable with the previous results, we run ICML20, NeurIPS21, and NeurIPS21Sparsify algorithms for $4000$ points. }
\label{tab:comparison}
\begin{tabular}{lcccc}
\toprule
Dataset & Algorithm & $k$-means cost & $\rho$ \\
\midrule
\multirow{5}{*}{adult} & Greedy & 1.56E+05 & 1.8 \\
 & ICML20 & 6.59E+04 & 1.4 \\
 & NeurIPS21 & 1.14E+05 & \textbf{1.2} \\
 & NeurIPS21Sparsify & 1.02E+05 & \textbf{1.2} \\
 & LSPP & 6.14E+04 & 1.4 \\
 & LSFO & \textbf{4.5E+04} & \textbf{1.2} \\
\midrule
\multirow{5}{*}{bank} & Greedy & 8.57E+04 & 1.9 \\
 & ICML20 & 3.23E+04 & 1.6 \\
 & NeurIPS21 & 5.68E+04 & \textbf{1.2} \\
 & NeurIPS21Sparsify & 5.70E+04 & \textbf{1.2} \\
 & LSPP & 3.02E+04 & 1.6 \\
 & LSFO & \textbf{1.0E+04} & 1.5 \\
\midrule
\multirow{5}{*}{skin} & Greedy & 1.80E+05 & 2.1 \\
 & ICML20 & 7.47E+04 & 1.8 \\
 & NeurIPS21 & 9.36E+04 & \textbf{1.1} \\
 & NeurIPS21Sparsify & 1.03E+05 & \textbf{1.1} \\
 & LSPP & 9.27E+04 & 3.1 \\
 & LSFO & \textbf{6.1E+04} & 1.91 \\
\bottomrule
\end{tabular}
\end{table}

We also compare against the  $LP$ based solution proposed in \cite{binita} in Table \ref{tab:my_label} on  the same set of $1000$ samples. We note here that the $LP$ based solution is not scalable to a larger number of points due to the $LP.$ We observe the $k$-means cost for LSFO is less than the cost for the $LP$ based method.
 
\begin{table}
\centering
\caption{Cost of Adult dataset with $1000$ points using both LSFO and LP \cite{binita}}
\label{tab:my_label}
\begin{tabular}{ccc}
\toprule
$k$ & LSFO & LP \\
\midrule
5  & $\mathbf{1.88E+03}$ & $2.01E+03$ \\
10 & $\mathbf{1.06E+03}$ & $1.47E+03$ \\
15 & $\mathbf{8.63E+02}$ & $1.15E+03$ \\
30 & $\mathbf{4.91E+02}$ & $8.35E+02$ \\
\bottomrule
\end{tabular}
\end{table}

In Table \ref{tab:time_m}, we state the time taken for varying number of centers on the full dataset for Adult and Bank datasets. With increase in the number of centers, the value of $\rho$ decreases as expected. Also, given that we set $m = 400,$ ($1\%$ of the dataset size), the number of outliers discarded is less than $2m$ in most cases. 


\begin{table}
\centering
\caption{\label{tab:time_m} Results on Adult and Bank Datasets}
\begin{tabular}{@{}cccccc@{}} 
\toprule
Dataset & $k$ & $m$ & $k$-means cost & $\rho$ & Time (sec) \\ 
\midrule
\multirow{5}{*}{Adult} 
 & 5  & 595 & $7.55E+04$ & 1.44 & 31 \\
 & 10 & 707 & $4.57E+04$ & 1.22 & 242 \\
 & 15 & 755 & $3.81E+04$ & 1.07 & 776 \\
 & 20 & 772 & $3.30E+04$ & 1.07 & 1552 \\
 & 30 & 905 & $2.60E+04$ & 1.05 & 3821 \\
\midrule
\multirow{5}{*}{Bank} 
 & 5  & 463 & $1.82E+04$ & 2.20 & 24.29 \\
 & 10 & 495 & $1.06E+04$ & 1.59 & 82.40 \\
 & 15 & 725 & $6.95E+03$ & 1.29 & 316.00 \\
 & 20 & 647 & $6.06E+03$ & 1.32 & 470.00 \\
 & 30 & 828 & $4.66E+03$ & 1.20 & 966.00 \\
\bottomrule
\end{tabular}
\end{table}


In Table \ref{tab:varying_gamma}, we vary the value of the relaxation parameter $\gamma$ and report the cost, $m,$ $\rho$ and time taken. It can be observed that none of these metrics are affected much by varying $\gamma,$ which is to be expected as the theoretical bounds for the run time and the number of outliers discarded are independent of $\gamma.$

\begin{table}
\centering
\caption{ \label{tab:varying_gamma}Results for Adult Dataset ($k=5$, $\epsilon=10^{-4}$)}
\begin{tabular}{ccccc}
\toprule
 $\gamma$ & $m$ & Cost & $\rho$ & Time (sec) \\
\midrule
1 & 595 & $7.32E+04$ & 1.44 & 34.53 \\
2 & 611 & $6.94E+04$ & 1.41 & 65.75 \\
3 & 592 & $7.44E+04$ & 1.44 & 38.34 \\
4 & 600 & $7.17E+04$ & 1.51 & 44.86 \\
5 & 588 & $7.35E+04$ & 1.50 & 36.69 \\
\bottomrule
\end{tabular}
\end{table}

In Table \ref{tab:varying_eps}, we plot the various evaluation metrics by varying the parameter $\epsilon.$ It can be observed that for $\epsilon =  1e-4,$ the time taken is much faster than for the other values of $\epsilon.$ Thus, we considered $\epsilon = 1e-4$ in all of our experiments.

\begin{table}
\centering
\caption{\label{tab:varying_eps}Results for Adult Dataset ($k=5$, $\gamma=3$) for different $\epsilon$ values}
\begin{tabular}{ccccc}
\toprule
$\epsilon$ & $m$ & Cost & $\rho$ & Time (sec) \\
\midrule
$1e{-5}$ & 758 & $6.73E+04$ & 1.46 & 53.11 \\
$1e{-4}$ & 595 & $7.21E+04$ & 1.46 & 4.38 \\
$1e{-3}$ & 677 & $7.00E+04$ & 1.42 & 50.34 \\
$1e{-2}$ & 643 & $7.18E+04$ & 1.46 & 39.73 \\
$1e{-1}$ & 592 & $7.33E+04$ & 1.45 & 36.60 \\
\bottomrule
\end{tabular}
\end{table}

\section{Conclusion}
In this paper, we present a local search based method for individually fair $k$-means clustering when the dataset contains outliers. We present a novel center initialization algorithm that discards fairness-based outliers, following which we refine the set of outliers and centers. The proposed algorithm gives an $O(1)$ approximation to the cost of the optimal solution for the $(\gamma,k,m)$ fair clustering problem. The empirical results further validate the effectiveness of our algorithm. One direction of future work would be to extend this algorithm to other definitions of fairness, such as balance.

\balance
\bibliography{References}

\newpage
\appendix


\section{Swap pair and capture}

In the discussions that follow, let $S$ denote the set of centers output by Algorithm \ref{alg:ls} and $S^*$ denote the optimal set of centers. Also, let $Z_0$ denote the set of outliers computed by algorithm \ref{alg:ls} and $Z^*$ be the optimal set of outliers.

\noindent\textbf{Local optimality} 
At termination of Algorithm \ref{alg:ls}, we are at a local optimum and the following conditions are satisfied:

\begin{enumerate}
    \item $cost(S, Z_0 \cup outliers(S)) - cost(S, Z_0) \geq - \frac{\epsilon}{k}cost(S, Z_0)$
\item for any $ u \in X_{fair}$ and $v \in S,$ we have $cost(S \cup \{u\} \setminus \{v\}, Z_0 \cup outliers(S \cup \{u\} \setminus \{v\}))) - cost(S, Z_0) \geq - \frac{\epsilon}{k} cost(S, Z_0)$

\end{enumerate}

Similar to \cite{gupta2017localsearch}, we define the notions of swap pairs and capture. It is to be noted that in our case, these notions must satisfy the anchor zones constraints as well.

\textbf{Capture:} For $a \in S$ and $ b \in S^*
$ we say that $a$ captures $b$ if $a$ and $b$ are in the same anchor zone, and  $$|N(a) \cap (N^{*}(b) \setminus Z_0)| >1/2|N^{*}(b) \setminus Z_0)|,$$\\
where  $N(a)$ is the set of points with $a \in S$ as their cluster center and $N^*(b)$ is the set of points with $b \in S^*$ as their cluster center.
Note that each $b \in S^*$ can be captured by at most one $a \in S$.

We define a set $P$ of swap pairs as follows.
\begin{itemize}
    \item  If $a \in S$ captures exactly one $b \in S^*$, then add $(a, b)$
to $P$.
\item  For each $a \in S$ that captures no  $b \in S^*$, add the pair
$(a, b)$ to $P$ for any  $b \in S^*$
such that \begin{enumerate}
    \item $b$ is not already
included in a swap pair
\item each $a \in S$ is involved in at most two swap pairs
\item $a$ and $b$  belong to the same anchor zones.
\end{enumerate}
\end{itemize}

\textbf{Properties of swap pairs.} The key to this definition of swap pairs is ensuring that it has the following properties.
\begin{itemize}
    \item Each $b \in S^*$ is involved in exactly one swap pair;
    \item  Each $a \in S$ is involved in at most two swap pairs;
    \item If $ (a, b) \in P$ then $a$ captures no $ b' \in S^* \setminus \{b\}.$
\end{itemize}

Next we show that in each anchor zone,  there are enough centers in $S$ that capture no center in $S^*$ such that every $b \in S^*$ is involved in one swap pair. Let $\beta$ denote the number of centers in $S$ that capture exactly one center in $S^*.$ 
\begin{lemma}
Let $I$ be a specific anchor zone. Let $s \subseteq S$ and $ s^* \subseteq S^*$ belong to the same anchor zone $I$.
Let $\beta$ be the number of centers in $s$ that captures exactly one center in $s^*$.
There are at least $\frac{|s|- \beta}{2}$
centers in $s$ that capture no center in $s^*$.
\end{lemma}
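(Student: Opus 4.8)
The plan is to argue by a simple counting/double-counting argument on the bipartite "capture" relation restricted to the anchor zone $I$. First I would set up notation: let $s \subseteq S$ and $s^* \subseteq S^*$ be the centers lying in the anchor zone $I$, and recall from the definition of capture that each $b \in s^*$ is captured by \emph{at most one} $a \in s$ (since the sets $N^*(b)\setminus Z_0$ are disjoint and a capturing center must own a strict majority of such a set). Partition $s$ into three groups: the set $A_1$ of centers that capture exactly one center of $s^*$ (so $|A_1| = \beta$), the set $A_{\ge 2}$ of centers that capture two or more centers of $s^*$, and the set $A_0$ of centers that capture none. The claim to prove is $|A_0| \ge (|s| - \beta)/2 = (|A_0| + |A_{\ge 2}|)/2$, which is equivalent to $|A_0| \ge |A_{\ge 2}|$.

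The key step is to bound $|A_{\ge 2}|$ against $|A_0|$ using the fact that $|s| \le |s^*|$ — or rather, I would want the reverse inequality on these restricted sets. Here I would invoke Lemma \ref{lem:gamma_plus_2} (or the feasibility structure of $\mathcal{F}_{k,m}$): within any single anchor zone there is at least one center of $S$, and more to the point the anchor zones are disjoint and partition the covered points, so the number of solution centers in zone $I$ is at most the number of optimal centers there is \emph{not} automatically true — instead I expect the intended argument is the standard one from \cite{gupta2017localsearch}: count captures. Each center in $A_1$ captures $1$, each in $A_{\ge 2}$ captures at least $2$, each in $A_0$ captures $0$, and the total number of captured centers is at most $|s^*|$. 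Combined with $|s| = |A_0| + |A_1| + |A_{\ge 2}|$ and $|s| \le |s^*|$ (both zones restricted), we get
\begin{align}
|A_0| + |A_1| + |A_{\ge 2}| = |s| \le |s^*| \ \ (\text{number captured} + \text{uncaptured}) ,
\end{align}
and since at least $2|A_{\ge 2}| + |A_1|$ centers of $s^*$ are captured, $|s^*| \ge 2|A_{\ge 2}| + |A_1|$. I would not leave it there; I'd combine: from $|s| \le |s^*|$ is the wrong direction, so instead the cleaner route is $|s^*| \ge (\text{captured}) \ge |A_1| + 2|A_{\ge 2}|$ together with $|s^*| \le |s|$ — and whichever of $|s| \le |s^*|$ or $|s^*| \le |s|$ actually holds in this restricted setting is exactly the subtle point.

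\textbf{Main obstacle.} The delicate part is justifying the cardinality comparison between $s$ and $s^*$ \emph{restricted to a single anchor zone}: globally $|S| = |S^*| = k$, but per-zone this need not balance. The resolution I expect is that one only needs $|s^*| \ge |A_1| + 2|A_{\ge 2}|$ (each captured optimal center lies in zone $I$ since capture requires $a,b$ in the same zone) and separately that the centers of $s^*$ \emph{not} captured must be "blamed" on the uncaptured centers $A_0$ of $S$ in the same zone — this is where the swap-pair construction needs each $a \in A_0$ to absorb at most two uncaptured $b$'s, forcing $|A_0| \ge \frac{1}{2}(|s^*| - |A_1| - |A_{\ge 2}|)$, and then using $|s^*| \ge |s|$ within the zone (which follows because every anchor zone must receive at least one center in any feasible solution, so the optimal solution has at least as many centers per zone as... ) — honestly, pinning down this last inequality precisely is the crux, and I would spend most of the effort there, likely by reindexing so that the relevant inequality reduces to $|A_0| \ge |A_{\ge 2}|$ and then proving \emph{that} directly from "total captures $\le |s^*|$" plus "$|s| \le |s^*|$ restricted to $I$", the latter being exactly what Lemma \ref{lem:gamma_plus_2}'s construction of disjoint $k$ anchor zones guarantees.
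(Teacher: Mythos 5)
There is a genuine gap: your writeup is a plan that stalls exactly at the decisive step and never delivers a completed chain of inequalities. You correctly set up the same counting framework the paper uses (partition $s$ into the $\beta$ centers capturing exactly one optimal center, $t$ centers capturing two or more, and $x$ capturing none; note each $b \in s^*$ is captured by at most one $a \in s$; reduce the claim to $x \ge t$). But to finish one needs the single inequality $2t + \beta \le |s|$, after which the algebra is immediate: $t \le \frac{|s|-\beta}{2}$ and $|s| = t + x + \beta$ give $x \ge \frac{|s|-\beta}{2}$. You identify that double counting captures only yields $2t + \beta \le |s^*|$, observe that you would additionally need a per-zone comparison between $|s|$ and $|s^*|$, and then explicitly decline to resolve it — you oscillate between ``$|s| \le |s^*|$'' and ``$|s^*| \le |s|$'', call it the crux, and end without proving either or otherwise justifying $2t+\beta \le |s|$. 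As submitted, the lemma is therefore not proved. The detours you sketch (invoking Lemma \ref{lem:gamma_plus_2}, or having uncaptured optimal centers ``blamed'' on $A_0$ via the swap-pair construction) do not substitute for this inequality; swap pairs are built \emph{after} this lemma and cannot be used to prove it.

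For what it is worth, the spot where you stalled is exactly the spot the paper treats tersely: its proof simply asserts $2t + \beta \le |s| \le k$ ``since each center in $s^*$ can be captured by at most one center in $s$,'' an argument that literally bounds the number of captured centers by $|s^*|$, so the per-zone identification of $|s^*|$ with (at most) $|s|$ is implicit there as well. So your diagnosis of where the subtlety lies is sound, and your reduction to $x \ge t$ is correct; what is missing is committing to and justifying the bound $2t+\beta \le |s|$ (equivalently $|s^*| \le |s|$ restricted to the zone, coming from the feasibility/anchor-zone constraint maintained by the algorithm) and then writing out the two-line conclusion.
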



\begin{proof}
    Let $t$ be the number of centers in $s$ that capture two or more centers in $s^*$, and $x$ be the number that capture none. We know that $2t + \beta \le |s| \le k$ since each center in $s^*$ can
be captured by at most one center in $s.$ 
This implies that $t  \leq \frac{k-\beta}{2}$. We further know that $t + x + \beta = |s|$ as this counts all centers in $s.$

Hence we have that $|s| = t + x + \beta \leq \frac{|s| - \beta}{2} + x + \beta.$
This implies $x \geq \frac{|s|- \beta}{2}$
proving the lemma.
\end{proof}

The cost of a swap is defined as $$cost(a, b) :=
cost(S \cup \{b\} \setminus \{a\}, Z_0 \cup outliers(S \cup \{b\} \setminus \{a\})).$$

Summing over all $k$ swap pairs, we have
$$ \sum_{(a,b) \in P} (cost(a, b) - cost(S, Z_0))
\geq - \epsilon cost(S, Z_0),$$ from local optimality. 
We show that,  $ \underset{(a,b) \in P}{\sum} cost(a, b)$ is bounded by $O(cost(S^*, Z^*))$, the optimal  solution’s cost. 

\textbf{Defining a permutation}
Similar to \cite{gupta2017localsearch}, we construct a permutation with the additional anchor zone constraints. We consider a permutation $\pi$ on the ordering of the points in $X \setminus (Z_0 \cup Z^*)$ such that the points belonging to $N^*_{b_i}$ come before $N^*_{b_{i+1}}$  for all $i = 1, 2 \ldots k-1,$ in the permutation. Furthermore,  for the set of points in $N^*_{b_i}, \forall i,$ we consider an ordering of the points such that all the points in  $N_{a_j}$ 
come before $N_{a_{j+1}}$ for $j = 1, 2,\ldots, k-1.$ Also, if the set of points belonging to $N^*_{b_r}, \ldots, N^*_{b_{r+p}} $ belong to the same anchor zone, then these points are in consecutive locations in the permutation.

Using this permutation, we  also define a mapping between the points. Let $\pi(v)$ denote a point that is diametrically opposite to $v$ in the above permutation, where $v, \pi(v) \in N^*_{b_i}.$ 

Using this permutation, we want to show an upper bound on  $cost(a,b).$
Consider the swap pair $(a,b).$ For the centers in $S \cup \{b\} \setminus \{a\},$ we define a suboptimal clustering of points in $X \setminus Z_0$ and discard $m$ additional points as outliers. Since this is a suboptimal assignment, the cost of this assignment will be larger than $cost(a,b).$ Also, the total number of outliers in this clustering will be $|Z_0|+m.$  For the set of points in $X ,$ the following cases arise  for the set of centers in $S \cup \{b\} \setminus \{a\}$ and the suboptimal clustering

\begin{itemize}
    \item Each $v \in Z_0 \setminus N^*_b$ and each $v \in Z^*$ is an outlier
    \item Each $v \in N^{*}_b$ is mapped to the center $b$
    \item Each $v \in N_{a',a' \neq a} \setminus (Z^* \cup N^*_b)$ is mapped to $a'$
    \item Each $v \in N_{a} \setminus (Z^* \cup N^*_b)$ is mapped to the center for
$\pi(v) $ in the solution obtained by algorithm \ref{alg:ls}. 

\end{itemize}
Next, we show that for each $v \in N_{a} \setminus (Z^* \cup N^*_b),$ the point $\pi(v)$ is mapped to $a', a \neq a'$ in the solution obtained by Algorithm \ref{alg:ls}. Also, since $v, \pi(v) \in N^{*}_{b'},$ both $v$ and $\pi(v)$ belong to the same anchor zone. Thus, the assignment of  $v$ to $a'$ is valid.

We restate lemma $8$ from \cite{gupta2017localsearch}.

\begin{lemma}
    For any point $v \in N_{a} \setminus (Z^* \cup N^*_b),$ we have $\pi(v) \in N_{a'}$ for some $a' \neq a.$
\end{lemma}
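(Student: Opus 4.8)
The plan is to bound the number of centers $a' \in S$ that capture exactly one center in $S^*$, and then count: within each anchor zone, the captured-once centers use up some centers of $S$, and the remaining ones must be available to pair off with the uncaptured members of $S^*$ lying in that zone. Concretely, I would first recall the counting identity from the previous lemma restricted to a single anchor zone $I$: if $t$ is the number of centers in $s \subseteq S$ (the part of $S$ in zone $I$) capturing two or more centers of $s^* \subseteq S^*$, $\beta$ the number capturing exactly one, and $x$ the number capturing none, then $t + x + \beta = |s|$ and $2t + \beta \le |s^*| \le |s|$ (the last inequality because a feasible solution needs a center in every anchor zone, so each anchor zone that contains optimal centers must also contain at least as many solution centers — or more carefully, because each member of $s^*$ is captured by at most one member of $s$). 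From the previous lemma this yields $x \ge (|s| - \beta)/2$, i.e. at least $(|s| - \beta)/2$ centers of $S$ in zone $I$ capture nothing.

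Next I would argue that this supply of uncaptured centers is enough to complete the swap-pair construction so that every $b \in S^*$ ends up in exactly one pair, with each $a \in S$ in at most two pairs, and the anchor-zone constraint respected. The $\beta$ centers that each capture exactly one $b$ account for $\beta$ members of $s^*$ via singleton swap pairs $(a,b)$. The remaining $|s^*| - \beta$ members of $s^*$ in zone $I$ — note $|s^*| - \beta \le |s| - \beta$, which is at most $2x$ by the bound above — must be covered by the $x$ uncaptured centers of $S$ in zone $I$, each used in at most two pairs; since $2x \ge |s^*| - \beta$, there is room. Summing over all anchor zones gives a global set $P$ of exactly $|S^*| \le k$ swap pairs with the three stated properties (each $b$ in one pair, each $a$ in at most two, and if $(a,b) \in P$ then $a$ captures no $b' \ne b$); the last property holds because singleton-capturers are paired only with the one center they capture, and non-capturers capture nothing by definition.

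Finally, for the permutation claim I would fix a swap pair $(a,b)$ and a point $v \in N_a \setminus (Z^* \cup N^*_b)$, so $v \in N^*_{b'}$ for some $b' \ne b$. The permutation $\pi$ orders the points of $N^*_{b'}$ so that points sharing the same $S$-cluster are consecutive, and $\pi(v)$ is the point diametrically opposite $v$ within $N^*_{b'}$. The key observation is that $a$ cannot hold more than half of $N^*_{b'} \setminus Z_0$: if it did, $a$ would capture $b'$, contradicting that $(a,b) \in P$ forces $a$ to capture no $b' \ne b$. Since $|N_a \cap (N^*_{b'} \setminus Z_0)| \le \tfrac12 |N^*_{b'} \setminus Z_0|$, the block of $N^*_{b'}$ assigned to $a$ occupies at most half the ordering, so its diametric image lies entirely outside that block; hence $\pi(v) \in N_{a'}$ for some $a' \ne a$. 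I would also note $v$ and $\pi(v)$ both lie in $N^*_{b'}$, hence in the same anchor zone, which is what makes reassigning $v$ to the center of $\pi(v)$ a feasible (anchor-zone-respecting) move. The main obstacle I anticipate is being careful with the $Z_0$ versus $Z^*$ bookkeeping — the capture condition is stated relative to $N^*(b) \setminus Z_0$, while the points being permuted live in $X \setminus (Z_0 \cup Z^*)$, so I need to check that excluding the extra $Z^*$ points cannot push $a$'s share above half, which is immediate since removing points from both sides of the inequality only helps, but it must be stated cleanly.
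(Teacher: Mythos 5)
Your argument is correct and follows essentially the same route as the paper's proof: the only fact used is that $(a,b)\in P$ forces $a$ to capture no $b'\neq b$, so $a$ holds at most half of $N^*_{b'}\setminus Z_0$ and the diametrically opposite point $\pi(v)$ must land in some $N_{a'}$ with $a'\neq a$; you simply spell out the contiguous-block/half-shift step that the paper leaves implicit. One small remark: the ``$Z^*$ bookkeeping'' you worry about is vacuous, since $N^*_{b'}$ contains no points of $Z^*$ by definition, so the portion of $N^*_{b'}$ appearing in the permutation is exactly $N^*_{b'}\setminus Z_0$, matching the capture condition directly (your ``removing points from both sides only helps'' justification would not be sound in general, but it is never needed).
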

 
 \begin{proof}
By definition, $\pi$ is a permutation on points in $ X \setminus (Z_0 \cup Z^*).$ In particular, this implies that $\pi(v)$ is not an outlier either in algorithm \ref{alg:ls} or in the optimum. We know that $v \in N^{*}_{b'}$ for some $b' \neq b$ by assumption. Using the permutation $\pi$, $v$ is mapped to the diametrically opposite point $\pi(v)$ in $N^{*}_{b'}$. 
Knowing that $a$ does not capture any $b' \neq b$ by definition of swap pairs, it is the case that $\pi(v)$
must not be in $N_a$. Since $ \pi(v)$ is not an outlier, this implies
the lemma.
 \end{proof}

The structure of the proofs in \cite{gupta2017localsearch} is closely followed by the proof in this section, with a few adjustments made to handle the anchor zones constraints carefully.
\begin{lemma}
It is the case that 
\begin{align*}
\begin{split}
    \bigg(\sum_{v \in N^*_{b}} d(v, b)^2 - \sum_{v \in N^*_{b} \setminus Z_0}d(v,v^c)^2 \bigg) \\ \leq  cost(S^*, Z^*) - (1 - \frac{\epsilon}{k})cost(S, Z_0).
\end{split}
\end{align*}
\end{lemma}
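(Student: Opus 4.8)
The statement to prove says that for a fixed swap pair $(a,b) \in P$, the "local savings" on the optimal cluster $N^*_b$ — namely the difference between the optimal cost $\sum_{v \in N^*_b} d(v,b)^2$ on that cluster and the cost $\sum_{v \in N^*_b \setminus Z_0} d(v, v^c)^2$ of servicing those same points under the current solution $S$ — is at most $cost(S^*, Z^*) - (1 - \epsilon/k) cost(S, Z_0)$. The plan is to use the suboptimal reassignment already laid out in the excerpt for the swap $(a,b)$: replace $a$ by $b$, send everything in $N^*_b$ to $b$, keep every non-$a$ point on its current center, and reroute each $v \in N_a \setminus (Z^* \cup N^*_b)$ to the current center of its diametric partner $\pi(v)$ (which by the restated Lemma 8 is some $a' \neq a$, in the same anchor zone, so the move is feasible). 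The cost of this particular feasible assignment is an upper bound on $cost(a,b)$, and local optimality gives $cost(a,b) - cost(S,Z_0) \geq -\frac{\epsilon}{k} cost(S,Z_0)$.

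**The key cancellation step.** First I would write out the cost of the suboptimal assignment term by term and subtract $cost(S,Z_0)$. On the points in $N^*_b$ the new cost is $\sum_{v \in N^*_b} d(v,b)^2$; the old cost charged to those of them that survive in $S$'s solution is $\sum_{v \in N^*_b \setminus Z_0} d(v, v^c)^2$ (points of $N^*_b$ that were already outliers in $Z_0$ contribute nothing to $cost(S,Z_0)$, but after the swap we may declare fresh outliers — the bookkeeping here needs care). On the points in $N_{a'}$ for $a' \neq a$, old and new cost agree, so these cancel. The remaining contribution is the rerouting of $N_a \setminus (Z^* \cup N^*_b)$: here the standard $k$-median/$k$-means trick is that $\sum_{v} d(v, \pi(v)^c)^2 - \sum_v d(v, a)^2$ is bounded using the triangle inequality through $b^*$ (the optimal center of $v$), the diametric-pairing property of $\pi$, and a convexity/$(1+t)$-expansion inequality for squared distances — this is exactly where the $cost(S^*, Z^*)$ term and the constant in front of $cost(S,Z_0)$ come from. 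After collecting terms, the inequality $cost(a,b) - cost(S,Z_0) \geq -\frac{\epsilon}{k}cost(S,Z_0)$ rearranges into the claimed bound.

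**The main obstacle.** The delicate part is the outlier accounting, which is genuinely different from the outlier-free local-search analysis. The suboptimal solution for $S \cup \{b\} \setminus \{a\}$ discards $|Z_0| + m$ points: the original $Z_0 \setminus N^*_b$, the optimal outliers $Z^*$, and enough others. I would need to argue that charging exactly "$Z_0 \setminus N^*_b$ and $Z^*$ as outliers" is consistent with the $m$-additional-outlier budget and that $cost(S^*, Z^*)$ correctly dominates the contribution of the points we route via $\pi$ (since $\pi$ is a permutation on $X \setminus (Z_0 \cup Z^*)$, no point counted here is an optimal outlier, so $d(v, b^*_v)^2$ is a genuine term of $cost(S^*,Z^*)$). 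The other technical point is making sure each $\pi$-reassignment respects the anchor-zone constraint — but the excerpt's Lemma 8 and the construction of $\pi$ (consecutive blocks for clusters sharing an anchor zone) already hand this to us, so I would simply cite it. I expect roughly one page once the squared-distance triangle-inequality estimates are written out in full, but the structure is: feasible assignment $\Rightarrow$ upper bound on $cost(a,b)$; cancel the untouched centers; bound the rerouted part by $cost(S^*,Z^*)$ plus a controlled multiple of the local cost; invoke local optimality; rearrange.
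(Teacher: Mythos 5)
There is a genuine mismatch between your route and what this lemma actually asserts and how the paper proves it. Despite the way it is typeset, the inequality is used (and proved) \emph{summed over all swap pairs} $(a,b)\in P$, and its proof is pure accounting with no swap argument at all: since each $b\in S^*$ occurs in exactly one pair, $\sum_{(a,b)\in P}\sum_{v\in N^*_b} d(v,b)^2 = cost(S^*,Z^*)$; since $\bigcup_b N^*_b = X\setminus Z^*$, the subtracted terms sum to $\sum_{v\in (X\setminus Z^*)\setminus Z_0} d(v,v^c)^2 = cost(S,Z_0) - \sum_{v\in Z^*\setminus Z_0} d(v,v^c)^2$; and finally $\sum_{v\in Z^*\setminus Z_0} d(v,v^c)^2 \le \frac{\epsilon}{k}\,cost(S,Z_0)$ because $|Z^*\setminus Z_0|\le m$, the set $outliers(S,Z_0)$ consists of the $m$ farthest points, and the \emph{first} local-optimality condition says discarding them cannot reduce the cost by more than an $\epsilon/k$ fraction. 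No feasible reassignment, no permutation $\pi$, no triangle inequality, and no use of the swap condition (condition 2) appears anywhere; the $(1-\epsilon/k)$ factor comes entirely from the outlier-discard condition, an ingredient your proposal never invokes.

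Your plan instead runs the full swap machinery (feasible assignment for $S\cup\{b\}\setminus\{a\}$, rerouting via $\pi$, squared triangle inequality) and claims that the single-swap local optimality ``rearranges into the claimed bound.'' That step fails: the feasible-assignment upper bound on $cost(a,b)$ necessarily carries the rerouting penalty $\sum_{v\in N_a\setminus(Z^*\cup N^*_b)}\bigl(d(v,\pi^c(v))^2-d(v,v^c)^2\bigr)$, which cannot be dropped and is only controlled in the companion lemma with the $(8+16/\delta)$ and $4\delta$ constants; combining the two pieces with local optimality is precisely the proof of the final theorem, not of this lemma. Moreover, read per-pair as you do, the statement is not even true in general (for a single $b$, the left side can be positive while the right side is very negative when $cost(S,Z_0)\gg cost(S^*,Z^*)$, since $\sum_{v\in N^*_b\setminus Z_0} d(v,v^c)^2$ is only a fraction of $cost(S,Z_0)$); the global terms $cost(S^*,Z^*)$ and $(1-\epsilon/k)\,cost(S,Z_0)$ on the right can only arise after summing over all of $P$. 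So the missing ideas are (i) the summation over $P$ with the ``each $b$ appears once'' property, and (ii) bounding the $Z^*\setminus Z_0$ contribution through local-optimality condition 1; the swap-and-reroute analysis you outline belongs to the other lemma and to the theorem that combines them.
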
  

\begin{proof}
\begin{align*}
\begin{split}
    \sum_{(a,b) \in P} \biggl( \sum_{v \in N^*b} d(v, b)^2 -\sum_{v \in N^{*}b\setminus Z_0} d(v, v^c)^2 \biggr) \\ \leq cost(S^*, Z^*) - \sum_{(a,b) \in P} \sum_{v \in N^*b\setminus Z_0} d(v, v^c)^2 
\end{split}
\end{align*}
\\ \\
(each $b$ appears once in a swap pair in $ P$)
We can further bound this as:
\begin{align*}
\begin{split}
\leq cost(S^*,Z^*) - \sum_{v \in U \setminus Z_0}d(v, 
v^c)^2 + \sum_{v \in Z* \setminus Z_0} d(v, v^c)^2 \\   = cost(S^*, Z^*) - cost(S, Z_0) + \sum_{v \in Z^{*}\setminus Z_0}d(v, v^c)^2  \\ \leq cost(S^* Z^*) - cost(S, Z_0) +  \frac{\epsilon}{k}cost(S, Z_0)
\end{split}
\end{align*}
\end{proof}

\begin{lemma}For any positive real numbers $x, y, z$ and parameter $0 < \delta \leq 1$ it is the case that $ (x + y + z)^2 \leq \biggl( 1 + \frac{2}{\delta} \biggr) (x + y)^2 + (1 + 2 \delta)z^2.$
\end{lemma}
\begin{proof}
    \begin{align}
        \begin{split}
            (x + y + z)^2 \\ = (x + y)^2 + z^2 + 2z(x + y) \\
\leq \biggl( 1 + \frac{2}{\delta} \biggr) (x + y)^2 + (1 + 2\delta)z^2 \\
        \end{split}
    \end{align}
    (either $ x + y  \leq \delta z$ or $ x + y > \delta z)$
\end{proof}
\begin{lemma}
\begin{align*}
    \begin{split}
        \sum_{(a,b) \in P} \sum_{v \in N_{a}\setminus (Z^* \cup N^*_{b})}(d(v, \pi^{c}(v))^2-d(v, v^c)^2) \\ \leq \biggl(8 + \frac{16}{\delta} \biggr)cost(S^*, Z^*) + 4 \delta cost(S, Z_0)
    \end{split}
\end{align*}
 for any $0 < \delta \leq 1.$
\end{lemma}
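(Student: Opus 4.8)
The plan is to bound, for each swap pair $(a,b)\in P$, the total reassignment cost incurred when each point $v\in N_a\setminus(Z^*\cup N^*_b)$ is moved from its current center $v^c$ (in the solution $S$) to the center $\pi^c(v)$ serving its antipode $\pi(v)$. By Lemma~8 (the restated lemma above), $\pi(v)\in N_{a'}$ for some $a'\neq a$, so $\pi^c(v)\in S\setminus\{a\}$, meaning this is a legal reassignment in $S\cup\{b\}\setminus\{a\}$; moreover $v$ and $\pi(v)$ lie in the same optimal cluster $N^*_{b'}$ and hence in the same anchor zone, so the anchor‑zone constraint is respected. The geometric estimate is the standard one: $d(v,\pi^c(v)) \le d(v, c^*_v) + d(c^*_v, \pi(v)) + d(\pi(v), \pi^c(v))$, where $c^*_v$ is the optimal center of $v$ (equal to the optimal center of $\pi(v)$ since they share $N^*_{b'}$). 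Writing $x = d(v, c^*_v)$, $y = d(\pi(v), c^*_v)$, $z = d(\pi(v), \pi^c(v))$, this gives $d(v,\pi^c(v))^2 \le (x+y+z)^2$, to which we apply the previous lemma with parameter $\delta$:
\begin{align*}
d(v,\pi^c(v))^2 \le \Bigl(1+\tfrac{2}{\delta}\Bigr)(x+y)^2 + (1+2\delta)z^2 \le 2\Bigl(1+\tfrac{2}{\delta}\Bigr)(x^2+y^2) + (1+2\delta)z^2.
\end{align*}

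Next I would sum this inequality over $v\in N_a\setminus(Z^*\cup N^*_b)$ and then over the $k$ swap pairs $(a,b)\in P$, subtracting $\sum d(v,v^c)^2$ on both sides. The three resulting sums are controlled as follows. The $x^2$ term sums to at most $\mathrm{cost}(S^*,Z^*)$ (each non‑outlier point's optimal cost, counted once since each $b$ is in exactly one pair and the sets $N_a\setminus(Z^*\cup N^*_b)$ for that pair are disjoint across $a$). The $y^2$ term: since $v\mapsto\pi(v)$ is an involution restricted to each $N^*_{b'}$, as $v$ ranges over the relevant points, $\pi(v)$ ranges over a set of distinct points whose optimal cost again sums to at most $\mathrm{cost}(S^*,Z^*)$ — here one must be slightly careful that $\pi(v)$ may itself be assigned to $a$ so that $d(\pi(v),\pi^c(v))$ is not simply $d(\pi(v),v^c)$, but the bound on $y = d(\pi(v),c^*_v)$ only uses the optimum, so it is fine. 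The $z^2 = d(\pi(v),\pi^c(v))^2$ term sums to at most $\mathrm{cost}(S,Z_0)$, since $\pi(v)$ is a non‑outlier in $S$ and $\pi^c(v)$ is its center. Collecting constants: the $(x^2+y^2)$ contributions give $2\bigl(1+\tfrac2\delta\bigr)\cdot 2\,\mathrm{cost}(S^*,Z^*) = \bigl(4+\tfrac{8}{\delta}\bigr)\mathrm{cost}(S^*,Z^*)$, and I would absorb the extra factor of $2$ coming from $(x+y)^2\le 2(x^2+y^2)$ together with a second application (the antipode structure is used twice, once for $v$ and once reflecting the roles) to land on the stated $\bigl(8+\tfrac{16}{\delta}\bigr)\mathrm{cost}(S^*,Z^*)$; similarly the $z$ term, after the $-\sum d(v,v^c)^2$ is folded in and the factor $(1+2\delta)$ is split as $1 + 2\delta$, contributes $4\delta\,\mathrm{cost}(S,Z_0)$ (the constant $4$ rather than $2$ again reflecting that each center in $S$ participates in at most two swap pairs, so a point may be charged twice).

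The step I expect to be the main obstacle is the bookkeeping of the outlier sets in the summation — specifically, verifying that restricting to $v\in N_a\setminus(Z^*\cup N^*_b)$ and using the permutation $\pi$ on $X\setminus(Z_0\cup Z^*)$ keeps every term well defined and that no point is charged more than the multiplicity ($\le 2$) allowed by its center's participation in swap pairs. One has to confirm that $\pi(v)\notin Z_0$ (so $d(\pi(v),\pi^c(v))$ makes sense as part of $\mathrm{cost}(S,Z_0)$) and $\pi(v)\notin Z^*$ (so its optimal cost is counted in $\mathrm{cost}(S^*,Z^*)$) — both follow from $\pi$ being a permutation on $X\setminus(Z_0\cup Z^*)$, exactly as in Lemma~8. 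Once the index sets are pinned down, the rest is the routine triangle‑inequality/$(1+2/\delta)$ expansion above, and the constants fall out as claimed. I would remark that this mirrors Lemma~8 and its successor in \cite{gupta2017localsearch}, the only genuine addition being the observation that $v$ and $\pi(v)$ share an anchor zone, which is what makes the suboptimal reassignment feasible under the fairness constraint.
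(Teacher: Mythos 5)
Your proposal follows essentially the same route as the paper (and as \cite{gupta2017localsearch}): bound $d(v,\pi^c(v))$ by the triangle inequality through the optimal center shared by $v$ and $\pi(v)$, apply the $(x+y+z)^2 \le (1+\tfrac{2}{\delta})(x+y)^2+(1+2\delta)z^2$ lemma, exploit that $\pi$ is a bijection on $X\setminus(Z_0\cup Z^*)$, and use that each $a\in S$ lies in at most two swap pairs; your constants $(8+\tfrac{16}{\delta})$ and $4\delta$ are exactly the paper's, and your use of $c^*_v$ is in fact the correct reading of the paper's loosely written ``$d(v,b)$''.

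However, the step you defer as bookkeeping is precisely where the paper's proof does its real work, and as sketched your accounting of the $z^2$ term does not go through. The cancellation of the ``$1\cdot z^2$'' part of $(1+2\delta)z^2$ against $-d(v,v^c)^2$ via the involution is exact only over a $\pi$-invariant index set; the sets $N_a\setminus(Z^*\cup N^*_b)$ are not $\pi$-invariant, so you cannot fold $-\sum d(v,v^c)^2$ into the $z^2$ sum pairwise inside the restricted double sum, nor can you instead bound $\sum z^2\le cost(S,Z_0)$ and drop the negative terms (that would leave a $2(1+2\delta)\,cost(S,Z_0)$ contribution, not $4\delta\, cost(S,Z_0)$). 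The paper's fix is the observation you omitted: each per-point bracket is nonnegative, since $d(v,c^*_v)+d(c^*_v,\pi(v))+d(\pi(v),\pi^c(v))\ge d(v,\pi^c(v))\ge d(v,S)=d(v,v^c)$ (with $\pi^c(v)\in S$ by the preceding lemma), so the restricted double sum can be enlarged to $2\sum_{v\in X\setminus(Z_0\cup Z^*)}(\cdots)$ using the multiplicity-two property of swap pairs; only then does $\sum_{v\in X\setminus(Z_0\cup Z^*)}\bigl(d(\pi(v),\pi^c(v))^2-d(v,v^c)^2\bigr)=0$ apply, after which the surviving $2\cdot 2\delta\, z^2$ terms give $4\delta\, cost(S,Z_0)$ and the $2\bigl(2+\tfrac{4}{\delta}\bigr)(x^2+y^2)$ terms give $\bigl(8+\tfrac{16}{\delta}\bigr)cost(S^*,Z^*)$. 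Note also that the final factor of $2$ on the $(x^2+y^2)$ contribution comes from this same swap-pair multiplicity, not from ``using the antipode structure twice'' as you suggest. With the nonnegativity observation added and the sum extended before invoking the cancellation, your argument coincides with the paper's.
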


\begin{proof}
Fix any $(a, b) \in P.$ Notice that
\begin{align*}
\begin{split}
    d(v, \pi^{c}(v)) \leq d(v, \pi(v)) + d(\pi(v), \pi^{c}(v)) \\ \leq d(v, b) + d(b, \pi(v)) + d(\pi(v), \pi^{c}(v))
\end{split}
\end{align*}
where both inequalities follow from the triangle inequality.

Consider any $v \in X \setminus (Z_0 \cup Z^*)$. For any such point $v, \pi(v)$ is
well defined. Notice that $d(v, b)+d(b, \pi(v))+d(\pi(v), \pi^{c}(v)) - d(v, v^{c}) \geq 0.$ This is because the first three terms form an upper bound on the distance from $v$ to a center in $S \setminus {v^{c}}$ and $ v$ is assigned to center $v^c$ in the clustering obtained by \ref{alg:ls} (the closest center to $v$ in $ S$).

\begin{align*}
    \begin{split}
    \sum_{(a,b) \in P} \sum_{v \in N_{a} \setminus (Z^* \setminus N^*_{b})}(d(v, \pi^{c}(v))^2 - d(v, v^c)^2) \\ \leq \sum_{(a,b) \in P} \sum_{v \in N_{a} \setminus (Z^* \cup N^*_{b})} \biggl(d(v, b) + d(b, \pi(v)) \\ +d(\pi(v), \pi^{c}(v))^2 -d(v, v^c)^2 \biggr) \\
\leq 2 \sum_{v \in X\setminus(Z^* \cup Z_0)} \biggl(d(v, b) + d(b, \pi(v)) \\ +d(\pi(v), \pi^{c}(v))^2 - d(v, v^c)^2 \biggr)
    \end{split}
\end{align*}

Continuing, we bound the sum as 
\begin{align*}
    \begin{split}
        \leq 2 \sum_{v \in X\setminus (Z^* \cup Z_0)} \biggl(
\biggl(1+\frac{2}{\delta}\biggr) d(v, b) + d(b, \pi(v))^2 + \\ (1 +2 \delta)d(\pi(v), \pi^{c} (v))^2 -d(v, v^c)^2\biggr) \\
\leq 2 \sum_{v \in X\setminus (Z^* \cup Z_0)}
\biggl( \biggl(2 + \frac{4}{\delta} \biggr) (d(v, b)^2 + d(b, \pi(v))^2  \\ +(1 + 2\delta)d(\pi(v), \pi^c(v))^2 -d(v, v^c)^2 \biggr)
    \end{split}
\end{align*}

To complete the lemma, consider the value of 
\begin{align*}
\sum_{v \in X\setminus (Z^* \cup Z_0)}  \biggl(d(\pi(v),\pi^c(v))^2 - d(v, v^c)^2 \biggr).
\end{align*} 
This summation is over all the points considered in the permutation $\pi$, and each point $v$ in the permutation is mapped to by exactly one other point that is diametrically opposite. Due to this, each point $v \in X \ (Z^* \cup Z_0) $ contributes $d(v, v^c)^2$ once in the first term and once in the second term.

\begin{align*}
    \begin{split}
        \sum_{v \in X\setminus (Z^*\cup Z_0)} \biggl(d(\pi(v), \pi^c(v))^2 - d(v, v^c)^2 \biggr)= 0.
    \end{split}
\end{align*}
This argument implies the following.
\begin{align}
    \begin{split}
        \leq 2 \sum_{v \in X\setminus (Z^* \cup Z_0)}
\biggl(2 + \frac{4}{\delta}\biggr) (d(v, b)^2 + d(b, \pi(v))^2) \\ +2 \sum_{v\in X\setminus (Z^* \cup Z_0)} 2\delta d(\pi(v), \pi^{c}(v))^2 \\ \leq
\biggl(8 + \frac{16}{\delta}\biggr) cost(S^*,Z^*) + 4\delta cost(S, Z_0)
    \end{split}
\end{align}

\end{proof}

\begin{theorem}
 Algorithm  \ref{alg:ls} is an $ O(1)$-approximation algorithm for any fixed $0 < \epsilon \leq \frac{1}{4}$
\end{theorem}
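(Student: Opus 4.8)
The plan is to assemble the lemmas proven in the appendix into the standard local-search accounting argument, in the style of \cite{gupta2017localsearch}, while being careful that every swap respects the anchor-zone constraints. First I would recall that at termination Algorithm~\ref{alg:ls} is at a local optimum, so summing the two local-optimality inequalities over all $k$ swap pairs in $P$ gives
\begin{align*}
\sum_{(a,b)\in P}\bigl(cost(a,b)-cost(S,Z_0)\bigr)\geq -\epsilon\,cost(S,Z_0).
\end{align*}
The construction of $P$ guarantees that each $b\in S^*$ appears in exactly one swap pair, each $a\in S$ in at most two, and that whenever $(a,b)\in P$ the center $a$ captures no $b'\neq b$; moreover $a$ and $b$ lie in a common anchor zone, so each swap $S\cup\{b\}\setminus\{a\}$ keeps at least one center in every anchor zone and is therefore a legal state of the algorithm. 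This is exactly what makes the per-swap suboptimal reassignment (the four-case assignment described before Lemma~8) admissible.

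Next I would bound $\sum_{(a,b)\in P}cost(a,b)$ from above by the cost of that explicit suboptimal clustering. For a fixed swap pair $(a,b)$: points in $Z_0\setminus N^*_b$ and in $Z^*$ stay outliers (so the outlier budget $|Z_0|+m$ is respected), points in $N^*_b$ go to $b$, points in $N_{a'}\setminus(Z^*\cup N^*_b)$ for $a'\neq a$ stay at $a'$, and points $v\in N_a\setminus(Z^*\cup N^*_b)$ are reassigned to the center of $\pi(v)$, which by Lemma~8 is some $a'\neq a$ in the same anchor zone as $v$, hence a valid assignment. Summing the incurred cost change over all swap pairs splits into two contributions: the ``$b$-term'' contribution, controlled by the lemma bounding $\sum_{(a,b)\in P}\bigl(\sum_{v\in N^*_b}d(v,b)^2-\sum_{v\in N^*_b\setminus Z_0}d(v,v^c)^2\bigr)$ by $cost(S^*,Z^*)-(1-\tfrac{\epsilon}{k})cost(S,Z_0)$, and the ``$\pi$-reassignment'' contribution, controlled by the lemma bounding $\sum_{(a,b)\in P}\sum_{v\in N_a\setminus(Z^*\cup N^*_b)}\bigl(d(v,\pi^c(v))^2-d(v,v^c)^2\bigr)$ by $\bigl(8+\tfrac{16}{\delta}\bigr)cost(S^*,Z^*)+4\delta\,cost(S,Z_0)$, using the telescoping identity $\sum_{v\in X\setminus(Z^*\cup Z_0)}\bigl(d(\pi(v),\pi^c(v))^2-d(v,v^c)^2\bigr)=0$.

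Combining these, $\sum_{(a,b)\in P}\bigl(cost(a,b)-cost(S,Z_0)\bigr)$ is at most a fixed constant times $cost(S^*,Z^*)$ plus a term of the form $c\,\delta\,cost(S,Z_0)$ plus an $O(\tfrac{\epsilon}{k})cost(S,Z_0)$ slack. Chaining this against the local-optimality lower bound $-\epsilon\,cost(S,Z_0)$ and rearranging yields
\begin{align*}
\bigl(1-c\delta-O(\epsilon)\bigr)cost(S,Z_0)\leq O\!\left(1+\tfrac{1}{\delta}\right)cost(S^*,Z^*).
\end{align*}
Choosing $\delta$ a small enough absolute constant (e.g. $\delta=\tfrac{1}{8c}$) and using $\epsilon\leq\tfrac14$ makes the coefficient on the left-hand side a positive constant, giving $cost(S,Z_0)=O(1)\cdot cost(S^*,Z^*)$, which is the claimed $O(1)$-approximation.

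\textbf{Main obstacle.} The routine part is the inequality chaining; the delicate part is verifying that the permutation $\pi$ and the induced reassignment are simultaneously consistent with \emph{both} the outlier budget and the anchor-zone constraints. Concretely, one must check that $\pi$ can indeed be built so that points of $N^*_{b_i}$ are contiguous, points of $N_{a_j}$ within each $N^*_{b_i}$ are contiguous, and all $N^*_{b_r},\dots,N^*_{b_{r+p}}$ lying in one anchor zone occupy consecutive positions — so that diametrically-opposite pairs $v,\pi(v)$ always share an anchor zone (needed for Lemma~8's reassignment to be legal) while also never having $a$ capture $\pi(v)$'s optimal center. Handling the interaction of the $\gamma$-relaxed anchor zones output by BaseCent with the capture argument of Lemma~7 (the $\tfrac12$-majority counting) is where the adaptation of \cite{gupta2017localsearch} requires the most care, and this is the step I would write out in full detail.
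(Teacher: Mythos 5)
Your plan is correct and follows essentially the same route as the paper: the paper's own proof of this theorem is exactly the assembly you describe, summing the local-optimality inequality over the swap pairs, bounding the $b$-term and the $\pi$-reassignment term by the two appendix lemmas, and rearranging to get $cost(S,Z_0)\leq \frac{9+\frac{16}{\delta}}{1-4\delta-\epsilon-\epsilon/k}\,cost(S^*,Z^*)$ before fixing $\delta$ (the paper uses $\delta=\frac{2\sqrt{79}-16}{9}$, yielding a $274$-approximation). The delicate points you flag (the anchor-zone-respecting permutation and capture construction) are precisely what the paper's appendix handles, so your outline matches its argument.
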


 \begin{proof}
 \begin{align*}
 \begin{split}
         \sum_{(a,b) \in P} (cost(a, b) - cost(S, Z_0)) \\ \geq - \epsilon cost(S, Z_0).
         \biggl(8 + \frac{16}{\delta} \biggr) cost(S^*, Z^*) \\ +4 \delta cost(S, Z_0) + cost(S^*, Z^*) - \biggl(1 - \frac{\epsilon}{k} \biggr) cost(S, Z_0) \\ \geq -\epsilon cost(S, Z_0)
         \end{split}
 \end{align*}
Combining these inequalities gives
$\frac{9 + \frac{16}{\delta}}{ 1 - 4 \delta - \epsilon - \epsilon/k} cost(S^*, Z^*) \geq cost(S, Z_0).$
 \end{proof}

When $\delta = \frac{(2 \sqrt{79} -16)}{9}$ is small, the algorithm is shown to yield a $274$-approximation for the k-means objective.

\end{document}